\newtheorem{theorem}{Theorem}
\newcommand{\diag}{\mathop{\mathrm{diag}}}
\newcommand{\diagonal}{\mathop{\mathrm{diagonal}}}
\renewcommand{\vec}{\mathop{\mbox{vec}}}
\DeclareMathOperator*{\argmin}{arg\,min}
\begin{document}
\title{Linear complexity SimRank computation based on the iterative diagonal estimation} 
\numberofauthors{3}
\author{
\alignauthor I.V. Oseledets\\
       \affaddr{Skolkovo Institute of Science and Technology, Novaya St., 100,  Skolkovo,
143025, Russia \\ Institute of Numerical Mathematics, Russian Academy of Sciences, Gubkina St., 8, Moscow, 119333}\\
       \email{i.oseledets@skoltech.ru}
\alignauthor G.V. Ovchinnikov\\
       \affaddr{Skolkovo Institute of Science and Technology, Novaya St., 100,  Skolkovo,
       143025, Russia \\ Institute for Design Problems in Microelectronics, Russian Academy of Sciences, prosp. 60-letiya Oktyabrya, 9, Moscow, Russia}
       \email{ovgeorge@yandex.ru}
\alignauthor A. M. Katrutsa\\
       \affaddr{Skolkovo Institute of Science and Technology, Novaya St., 100,  Skolkovo,
143025, Russia \\ Moscow Institute of Physics and Technology, Institutskiy Lane 9, Dolgoprudny, 141700, Russia}\\
       \email{aleksandr.katrutsa@phystech.edu}
}


\maketitle

\begin{abstract}
This paper presents a deterministic linear time complexity IDE-SimRank method to approximately compute SimRank with proved error bound. 
SimRank is a well-known similarity measure between graph vertices which relies on graph topology only and is built on intuition that "two objects are similar if they are related to similar objects".
The fixed point equation for direct SimRank computation is the discrete Lyapunov equation with specific diagonal matrix in the right hand side. 
The proposed method is based on estimation of this diagonal matrix with GMRES and use this estimation to compute singe-source and single pairs queries. These computations are executed with the part of series converging to the discrete Lyapunov equation solution.
\end{abstract}


\keywords{SimRank, graph, similarity measure, Lyapunov equation, inexact GMRES}

\section{Introduction}
This paper presents a new method to efficiently compute the SimRank \cite{SimRank} which is a topologically induced similarity measure between two given vertices of a graph. 

Similarity measures for graphs are useful in many applications such as relation mining~\cite{li2013mapreduce}, document-by-document querying~\cite{li2009docsim,Giles2014} and many others.
A major problem in SimRank computation is the high storage and time complexity of the direct iterative process converging to SimRank. Several schemes have been presented for the approximate computation of the SimRank which are based on different concepts~\cite{Lizorkin2010,Li2010,Fogaras2005}. 
In this paper we propose a two-step method for the approximation of the SimRank.  The first and the most expensive step is the computation where we iteratively estimate the diagonal of the SimRank matrix. 
After that the SimRank scores can be computed by the approximate solution of the discrete Lyapunov equation. Such approach has been considered in \cite{Kusumoto2014}. 
The difference is that instead of using Gauss-Seidel method combined with Monte-Carlo computations to estimate the diagonal we use numerical linear algebra techniques. 
We prove that the linear system for the diagonal has bounded condition number, and we have an $\mathcal{O}(n)$ matrix-by-vector product with guaranteed accuracy, thus \emph{inexact GMRES method} is the method of choice. 
Using the theory of IGMRES we get a provable $\mathcal{O}(n)$ complexity algorithm for the computation of SimRank scores. The final SimRank approximation is a sparse matrix. 
We also provide an efficient practical algorithm  for the computation of SimRank.





\section{Problem Statement}
Let $G = (V, E)$ be a graph, where $V$ is a set of vertices and $E$ is a set of edges. The order of the graph is the number of vertices $|V| = n$. Similarity measures between graph vertices are very useful in some applications. One of the approach to define such similarity measure is SimRank~\cite{SimRank}. In the foundation of SimRank definition lies idea that ''two objects are similar if they are referenced by the similar objects''. It is proposed that vertex similarity lies between $0$ and $1$ with vertex being maximally similar to itself with similarity $1$. By $s(a, b)$ denote the SimRank between vertices $a$ and $b$ defined as
\begin{equation}
s(a,b) =
\begin{cases}
1, \quad $if$ \; a=b & \\
0, \quad $if$ \; I(a) = \varnothing \; $or$\; I(b) = \varnothing &\\ 
\displaystyle \frac{c}{|I(a)| |I(b)|}\sum_{i=1}^{|I(a)|}\sum_{j=1}^{|I(b)|}s(I_i(a),I_j(b)), & $otherwise$,
\end{cases}
\label{def::SimRank}
\end{equation}
where $I(v)$ is the set of in-neighbours of vertex $v$, constant $c \in (0,1)$. Denote by $S$ a SimRank matrix, where $(i,j)$-th element is SimRank between $i$-th and $j$-th vertices. To find the SimRank matrix $S$, one writes $s(a, b)$ for all pairs $(a, b)$ of vertices and obtains linear system of $n^2$ equations which has unique solution~\cite{SimRank}. 
Let $A$ be an adjacency matrix of the graph $G$ normalized by columns:
\[
\sum_{i=1}^{n}A_{ij} = 1, \quad j = 1, \ldots, n, 
\]
and
$$ 
  W = \sqrt{c} A.
$$
The SimRank matrix $S$ is the solution of the following equation: 
\begin{equation}
S = W^{\top}SW -\diag(W^{\top}SW) + I = W^{\top}SW + D,
\label{eq::gdl}
\end{equation}
and $D$ is the diagonal matrix $D = -\diag(W^{\top} S W) + I$.
Equation \eqref{eq::gdl} is typically solved by a fixed-point iteration
\begin{equation}
\begin{split}
& S_0 = I,\\
& S_{k+1} = c A^{\top}S_{k}A -c\diag(A^{\top}S_{k}A) + I,
\end{split}
\label{eq::si2}
\end{equation}
where $\diag(P)$ is an operator that maps given matrix $P$ to the diagonal matrix with diagonal entries equal the diagonal entries of the matrix $P$. 
The iteration \eqref{eq::si2} converges if $c < 1$. 
Direct usage of $\eqref{eq::si2}$ requires $\mathcal{O}(n^2)$ memory cells and $\mathcal{O}(n^3)$ operations for one iteration, thus is infeasible for real-world graphs. 
In this paper we propose a new \emph{Iterative Diagonal Estimation method} (IDE-SimRank method) that approximately computes the SimRank in $\mathcal{O}(n)$ time and memory.  

\section{IDE-SimRank method}
In this section we describe IDE-SimRank method and give theoretical foundation which proves the reasoning and accuracy of our method. 

From \eqref{eq::gdl} it is easy to get a linear system with $n$ unknowns. 
Since $\diag{S} = I$ by definition, and 
$$
   S = W^{\top} S W + D, 
$$
where $D$ is a diagonal matrix, we get
\begin{equation}
   I = \diag(W^{\top} S W) + D = F(D), 
   \label{simrank:maineq}
\end{equation}
where $F(D)$ is a linear operator that maps a diagonal matrix (i.e., a vector of length $n$) to a diagonal matrix (also a vector of length $n$), and is defined as
\begin{equation}
   F(D) = D + \diag(W^{\top} S(D) W),
   \label{simrank:F}
\end{equation}
and $S(D)$ is the solution of the discrete Lyapunov equation
$$
   S(D) = W^{\top} S(D) W + D.
$$
So, to evaluate $F(D)$ for a given $D$ we have to solve the discrete Lyapunov equation and take only the diagonal of the solution.
This can not be done exactly in $\mathcal{O}(n)$ complexity, but it is possible to do \emph{approximate computation} of $F(D)$ with guaranteed accuracy. Moreover,the operator $F(D)$ is well-conditioned, so it is natural to use \emph{iterative methods with inexact matrix-by-vector products} to solve \eqref{simrank:maineq}. 
Inexact GMRES \cite{inexact} is typically a method of choice. 
In order to get a working method, we need two components:
\begin{enumerate}
\item Estimates for the condition number of the linear operator $F(D)$. 
\item Algorithm for the computation of $F(D)$ with a given accuracy $\varepsilon$. (and estimate of its complexity). 
\end{enumerate}

Note that the equation~\eqref{simrank:maineq} was used in the paper \cite{Kusumoto2014} under the name \emph{Linearized SimRank}. 
The difference in our approach is that we use sparse matrix arithmetic and inexact iterative method for its solution 
(compared to the Gauss-Seidel method combined with Monte-Carlo estimation to compute $S(D)$).

\subsection{Estimation of condition number of $F(D)$}

Let $\vec(\cdot)$ be an operator that maps an $n \times n$ matrix to a vector of length $n^2$ taking column-by-column. 
Denote by $Pv=\vec(D(v))$ an operator that maps a vector $v$ of length $n$ to a vector of length $n^2$, where $D(v)$ is a diagonal matrix with $v$ on the diagonal.
Now by a slight abuse of notation let $F$ and $S$ act on a vector $d$ of length $n$.
Then, the matrix corresponding to the operator $F(d)$ can be written using Kronecker products as
\begin{equation}\label{simrank:eqformula}
    \begin{split}
        F &= I + P^{\top}(W^{\top} \otimes W^{\top}) (I - W^{\top} \otimes W^{\top})^{-1} P \\
          &= P^{\top}(I - W^{\top} \otimes W^{\top})^{-1} P.
\end{split}
\end{equation}
The matrix $P$ is the submatrix of the $n^2 \times n^2$ identity matrix, thus $F$ is a submatrix of the matrix $(I - W^{\top} \otimes W^{\top})^{-1}$.

The matrix $F$ is the submatrix of the inverse $M$-matrix, thus it is also the inverse $M$-matrix (see~\cite{inv_M}),
i.e. it is non-singular. Moreover, its condition number can be bounded.
\begin{theorem}
\[
    \kappa(F)_1 \leq \frac{2(1+c)}{(1-c)^2}. 
\]
\label{th:cond}
\end{theorem}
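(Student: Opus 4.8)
The plan is to write the condition number as $\kappa(F)_1=\|F\|_1\,\|F^{-1}\|_1$ and bound the two factors separately, exploiting that $F$ is exactly the principal submatrix of $M=(I-W^{\top}\otimes W^{\top})^{-1}$ indexed by the $n$ ``diagonal'' positions $(i,i)$ among the $n^{2}$ Kronecker indices, as recorded by $P$. Writing $\hat W=W^{\top}\otimes W^{\top}\ge 0$ and $B=I-\hat W$, the matrix $M=B^{-1}$ is nonnegative because $\rho(\hat W)=\rho(W)^{2}\le c<1$ (here $\rho(W)=\sqrt{c}\,\rho(A)\le\sqrt{c}$ since $A$ is column-stochastic). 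Hence $F\ge 0$ and, being a submatrix of an inverse $M$-matrix, $F^{-1}$ is an $M$-matrix; this sign structure is what makes both factors controllable.

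For $\|F\|_1$ I would use the single algebraic fact that each column of $W$ sums to $\sqrt{c}$, i.e. $\mathbf{1}^{\top}W=\sqrt{c}\,\mathbf{1}^{\top}$. Applied through the Kronecker structure this gives $W^{\top}\mathbf{1}_n=\sqrt{c}\,\mathbf{1}_n$ and therefore $\hat W\,\mathbf{1}_{n^{2}}=c\,\mathbf{1}_{n^{2}}$, so that $M\,\mathbf{1}_{n^{2}}=\tfrac{1}{1-c}\,\mathbf{1}_{n^{2}}$. Since $M\ge 0$ and $F$ is a principal submatrix of $M$, the aggregated absolute entries of $F$ are dominated by the corresponding ones of $M$, yielding $\|F\|_1\le \tfrac{1}{1-c}$.

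The hard part is $\|F^{-1}\|_1$, where I would split the $n^{2}$ indices into the $n$ diagonal ones and the remaining $n^{2}-n$ off-diagonal ones and write $B$ in the block form with blocks $B_{11},B_{12},B_{21},B_{22}$. Because $F=M_{11}$, the block-inverse (Schur-complement) identity supplies the closed form $F^{-1}=B_{11}-B_{12}B_{22}^{-1}B_{21}$, and each factor is estimated using only that every column of $W$ has Euclidean norm at most $\sqrt{c}$. Concretely: $B_{11}=I-(\hat W)_{11}$ has $\|B_{11}\|_1\le 1+c$, since the off-diagonal mass along each line is a column square-sum $\sum_j W_{ji}^2\le c$; the coupling blocks obey $\|B_{12}\|_1,\|B_{21}\|_1\le c$ by Cauchy--Schwarz applied to $\sum_j W_{jp}W_{jq}\le\|W_{\cdot p}\|_2\,\|W_{\cdot q}\|_2\le c$; and $B_{22}=I-(\hat W)_{22}$ with $\|(\hat W)_{22}\|_1\le c$ is invertible with $\|B_{22}^{-1}\|_1\le\tfrac{1}{1-c}$ via its Neumann series. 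Submultiplicativity and the triangle inequality then give $\|F^{-1}\|_1\le (1+c)+\tfrac{c^{2}}{1-c}=\tfrac{1}{1-c}$.

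Multiplying the two bounds yields $\kappa(F)_1\le \tfrac{1}{(1-c)^{2}}\le \tfrac{2(1+c)}{(1-c)^{2}}$, the last inequality holding with room to spare because $2(1+c)>1$ for $c\in(0,1)$; this establishes the claim. I expect the only genuinely delicate point to be the clean bookkeeping of the Schur-complement blocks $B_{12},B_{22}^{-1},B_{21}$ over the off-diagonal index set, together with verifying the Neumann invertibility of $B_{22}$; everything else reduces to the column-normalization $\mathbf{1}^{\top}W=\sqrt{c}\,\mathbf{1}^{\top}$ and the nonnegativity and $M$-matrix structure inherited from $B=I-W^{\top}\otimes W^{\top}$.
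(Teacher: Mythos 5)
Your argument is correct and rests on the same skeleton as the paper's proof: realize $F$ as the principal submatrix of $(I-W^{\top}\otimes W^{\top})^{-1}$ indexed by the diagonal positions, bound $\|F\|$ through nonnegativity of the Neumann series, and bound $\|F^{-1}\|$ via the Schur complement $F^{-1}=B_{11}-B_{12}B_{22}^{-1}B_{21}$, inverting $B_{22}$ by a Neumann-series argument. Where you genuinely depart from the paper is in the treatment of the coupling blocks, and the departure works in your favor. The paper bounds its off-diagonal blocks generically as submatrices of $I-Z$, i.e.\ by $\|I-Z\|\le 1+c$, and therefore obtains $\|F^{-1}\|\le(1+c)+\frac{(1+c)^2}{1-c}=\frac{2(1+c)}{1-c}$. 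You instead note that the coupling blocks receive no contribution from the identity --- they are submatrices of $-\hat W$ itself --- so their norms are at most $c$, which gives $\|F^{-1}\|\le(1+c)+\frac{c^{2}}{1-c}=\frac{1}{1-c}$ and hence $\kappa(F)\le\frac{1}{(1-c)^{2}}$, strictly sharper than the stated bound $\frac{2(1+c)}{(1-c)^{2}}$. Your exact identity $M\mathbf{1}=\frac{1}{1-c}\mathbf{1}$, coming from column-stochasticity of $A$, is likewise a cleaner route to $\|F\|\le\frac{1}{1-c}$ than the paper's generic submatrix bound, though both yield the same number.

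Two caveats, neither fatal. First, a convention issue you share with the paper: everything you compute is a \emph{row-sum} bound --- column-stochasticity gives $\hat W\mathbf{1}=c\mathbf{1}$, which controls row sums of $\hat W=W^{\top}\otimes W^{\top}$, not column sums, since row sums of $W$ itself are unbounded. Under the standard reading of $\|\cdot\|_1$ as maximum absolute column sum, what both proofs literally establish is the bound for $\kappa_{\infty}(F)=\kappa_1(F^{\top})$; the paper's own assertion $\|Z\|_1=c$ shows it is using the max-row-sum reading, and your usage is consistent with that, so this is an inherited imprecision rather than a gap in your proposal. Second, your Cauchy--Schwarz justification targets the wrong quantity for $B_{12}$: the relevant row sums there are $\sum_{j_1\ne j_2}W_{j_1 i}W_{j_2 i}\le\bigl(\sum_{j}W_{ji}\bigr)^{2}=c$ (expand the square), whereas $\sum_j W_{jp}W_{jq}$ is a row sum of $(\hat W)_{21}$. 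The cleanest fix covers both blocks at once: any submatrix of the nonnegative matrix $\hat W$ has (row-sum) norm at most $\|\hat W\|=c$. This is cosmetic; the bounds you claim are true, and the resulting constant stands.
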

\begin{proof}
    For simplicity, introduce the matrix 
    $$
    	Z = W^{\top}\otimes W^{\top}.
    $$
    The matrix $Z$ is nonnegative and $\Vert Z \Vert_1 = c$.
Since $F$ is a submatrix of the matrix $(I - Z)^{-1}$, there exists an $n^2 \times n^2$ permutation matrix $Q$ such that 
$$
Q (I - Z) Q^{\top} = 
\begin{bmatrix} 
A & B \\
C & D \\
\end{bmatrix},
$$
and 
$$
Q (I - Z)^{-1} Q^{\top} = \begin{bmatrix} * & * \\
* & F
\end{bmatrix}.
$$
Using well-known formulas for block matrix inversion, the matrix $F^{-1}$ can be written as the Schur complement
$$
F^{-1} = D - BA^{-1}C,
$$
Consequently, the 1-norm of $F^{-1}$ is bounded in the following way:    
$$
\Vert F^{-1} \Vert_1 \leq \Vert D \Vert_1 + \Vert B \Vert_1 \Vert A^{-1} \Vert_1 \Vert C \Vert_1 
$$
The matrices $D$, $B$, $C$ are submatrices of the matrix $I - Z$, therefore their norms are bounded by $\Vert I - Z \Vert_1 \leq (1 + c)$ (the norms of the submatrices can not exceed the norm of the matrix). 
To estimate $\Vert A^{-1} \Vert_1$ note that $A$ is also a principal submatrix of $(I - Z)$, thus it can be represented as
$$
(I - \widehat{Z}),
$$
where $\Vert \widehat{Z} \Vert_1 \leq c$, therefore using the standard Neumann series argument
$$
  \Vert A^{-1} \Vert_1 \leq \frac{1}{1 - c}.
$$
Finally, 
$$
\Vert F^{-1} \Vert_1 \leq (1 + c) + \frac{(1 + c)^2}{1-c} = \frac{2(1+c)}{1-c}. 
$$
The matrix $F$ is the submatrix of $(I - Z)^{-1}$, thus
$$
   \Vert F \Vert_1 \leq \Vert (I - Z)^{-1} \Vert_1 \leq \frac{1}{1 - c},
$$
and this completes the proof.
\end{proof}

\subsection{Fast approximate matrix-by-vector product}
The key component for the efficient solution of the system~\eqref{simrank:maineq} is the fast evaluation of $F(D)$ for a given $D$. 
The main computational cost comes from the solution of the discrete Lyapunov equation of the form
$$
S = W^{\top} S W + D,
$$
where $D$ is a diagonal matrix. 
The solution can be written as 
\begin{equation}
\begin{split}
    S &= \sum_{k=0}^{\infty} (W^{\top})^k D W^k = \sum_{k=0}^K (W^{\top})^k D W^k + R_{K}=\\ 
      &= S_K + R_K,
\end{split}
\label{simrank:series}
\end{equation}
where $\Vert R_K \Vert_1 \leq c^K$. The truncated series $S_K$ gives an approximation to $S(D)$ with guaranteed accuracy. 

Algorithm~\ref{alg::fast_matvec} presents fast approximate matvec algorithm used further in GMRES. 
In this algorithm the operator $\diagonal(\cdot)$ maps given $n \times n$ matrix to its $n \times 1$ diagonal.
Note, that to get $\mathcal{O}(n)$ complexity we have introduced \emph{thresholding}: the elements smaller than 
$\tau$ are zeroed out, and all computations are implement in sparse matrix arithmetic.
\begin{algorithm}
\SetAlgoLined
\KwData{Scaled adjacency $n \times n$ matrix $W$, given $n \times 1$ vector $x$, threshold $\tau$, number of iterations $K$.}
\KwResult{Approximate result of matvec $y$}
$y_0$ = $x$\\
$X_0 = \diag(x)$\\
\For{$k=1\ldots K$}{
$X_k = W^{\top}X_{k-1}W$\\
$d = \diagonal(X_k)$\\
Threshold to zero all elements of $d$, which absolute values are less than given threshold $\tau$.\\  
$y_{k}$ = $y_{k-1} + d$
}
$y = y_K$
\caption{Fast approximate matvec algorithm}
\label{alg::fast_matvec}
\end{algorithm}

The error of the matrix-by-vector product can be estimated by the following theorem.
\begin{theorem}
    The result of Algorithm~\ref{alg::fast_matvec} satisfies
    \begin{equation}\label{simrank::acc}
        \Vert y - y_K \Vert_{\infty} \leq \tau \frac{(1 + c)^K - 1}{c} + c^K.
    \end{equation}
\end{theorem}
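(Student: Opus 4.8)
The plan is to split the total error into a \emph{truncation} part (from stopping the Neumann series at $k=K$) and a \emph{thresholding} part (from zeroing out entries below $\tau$), bound each separately, and add them. I would introduce the exact iterates $X_k = (W^\top)^k \diag(x) W^k$, the exact answer $y = \diagonal\bigl(\sum_{k=0}^{\infty} X_k\bigr)$, and the untruncated-but-exact partial sum $\tilde y_K = \diagonal\bigl(\sum_{k=0}^{K} X_k\bigr)$, so that $y - y_K = (y - \tilde y_K) + (\tilde y_K - y_K)$.

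For the first term I use \eqref{simrank:series}: since $y - \tilde y_K = \diagonal(R_K)$ and every diagonal entry of a matrix is dominated by its column sum, $\Vert \diagonal(R_K)\Vert_\infty \le \Vert R_K\Vert_1 \le c^K$, which produces the $c^K$ summand. The thresholding term is the crux. Let $\hat X_k$ be the matrices actually produced by Algorithm~\ref{alg::fast_matvec} and $\Theta_k = W^\top \hat X_{k-1} W - \hat X_k$ the perturbation introduced by zeroing sub-$\tau$ entries, so that $\Vert \Theta_k\Vert_{\max} \le \tau$ entrywise. Writing $E_k = X_k - \hat X_k$ for the accumulated matrix error, the recursion $\hat X_k = W^\top \hat X_{k-1} W - \Theta_k$ gives $E_k = W^\top E_{k-1} W + \Theta_k$ with $E_0 = 0$.

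The key elementary estimate is that the two-sided product contracts the entrywise max-norm by exactly the factor $c = \Vert Z\Vert_1$ of Theorem~\ref{th:cond}: since $W$ is nonnegative with every column summing to $\sqrt{c}$, one bounds each entry $(W^\top M W)_{ij}$ by $\Vert M\Vert_{\max}$ times the product of two column sums of $W$, giving $\Vert W^\top M W\Vert_{\max} \le c\,\Vert M\Vert_{\max}$ for any $M$. Hence $\Vert E_k\Vert_{\max} \le c\,\Vert E_{k-1}\Vert_{\max} + \tau$.

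Finally I would combine the matrix error with the accumulation into the output. Let $\epsilon_k$ be a common majorant of both $\Vert E_k\Vert_{\max}$ and the $\infty$-norm of the thresholding error accumulated in $y$ up to step $k$. The accumulation step adds $\diagonal(E_k)$ to the running vector, and since $\Vert \diagonal(E_k)\Vert_\infty \le \Vert E_k\Vert_{\max} \le c\,\epsilon_{k-1} + \tau$, both quantities obey $\epsilon_k \le (1+c)\epsilon_{k-1} + \tau$ with $\epsilon_0 = 0$. Unrolling this scalar recurrence yields $\epsilon_K \le \tau \frac{(1+c)^K - 1}{c}$, and adding the truncation bound $c^K$ gives \eqref{simrank::acc}. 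The main obstacle I anticipate is the bookkeeping at this last step: one must dominate both the persisting output error and the freshly propagated matrix error by a single majorant, which is precisely what turns the per-step contraction factor $c$ into the compounding factor $(1+c)$ and produces the geometric sum rather than a mismatched constant.
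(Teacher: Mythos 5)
Your proposal is correct and follows essentially the same route as the paper's own proof: the paper likewise splits the error into a thresholding part governed by the recursion $\delta_{k+1} \leq \delta_k + c\,\delta_k + \tau$ (your common-majorant bookkeeping is precisely what justifies this $(1+c)$ factor, which the paper states without derivation), solves the linear recurrence $\eta_{k+1} = (1+c)\eta_k + \tau$ to obtain $\tau\frac{(1+c)^K-1}{c}$, and adds the Neumann-series remainder $c^K$. Your write-up merely makes explicit two steps the paper leaves implicit, namely the entrywise contraction $\Vert W^{\top} M W\Vert_{\max} \leq c\,\Vert M\Vert_{\max}$ and the domination of both the matrix error and the accumulated output error by a single majorant.
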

\begin{proof}
    Suppose that $\widehat{X}_k$ is the result of Algorithm~\ref{alg::fast_matvec} for 
    $\tau = 0$ after $k$ steps, and
    $$X_k = \widehat{X}_k + E_k, \quad \Vert E_k \Vert_{C} = \delta_k.$$
    Then,
    $$
       \delta_{k+1} \leq \delta_k + c \delta_k + \tau.
    $$
    It is obvious that 
    $$\delta_{k} \leq \eta_k, $$
    where $\eta_k$ solves  
    $$
    \eta_{k+1} = (1 + c) \eta_k + \tau, 
    $$
    which can be solved as
    $$
       \eta_k = \tau + \frac{( (1 + c)^k - 1)}{c}.
    $$
    The final result is obtained by using a well-known estimate on the remainder of the Neumann series.
\end{proof}
It is easy to get the upper bound on the complexity. 
The number of terms in the SimRank series to get the accuracy $\varepsilon$ can be estimated as $\log_c \varepsilon^{-1}$. 
At each step, the diagonal of the matrix $(W^{\top})^k D W^k$ has to be computed. 
Let $m$ be an average degree of the vertex. 
Then the sparsity of $W^k$ is bounded by $m^k$, and the evaluation reduces to the evaluation of the column norms of the matrix $W^k D^{1/2}$. 
In practice, however, this bound is a significant overestimation.

\subsection{Putting it all together}
The GMRES algorithm is summarized in Algorithm~\ref{alg::GMRES} \cite{gmres}, and it is assumed that the matrix-by-vector product is exact.
All other operations can be easily implemented in $\mathcal{O}(n)$ complexity.
\begin{algorithm}
 \SetAlgoLined
 \KwData{Matrix $A$, right-hand side $b$, initial guess $x_0$, stopping tolerance $\varepsilon$. }
 \KwResult{Approximate solution $x_m$: $\|Ax_m - b\| \leq \varepsilon$}
 
 Start: compute $r_0 = b - Ax_0$, $v_1 = r_0 /\|r_0 \|$, $V_1 = v_1$, $\beta = \|b\|$.
 
 Iterations:
 
 Orthogonalize: $\tilde{v}_{k+1} = Av_k - V_k h_k$, where $h_k = V_k^T Av$.
 
 Normalize: $v_{k+1} = v_{k+1} /\|v_{k+1}\|$.
 
 Update: $V_{k+1} = ( V_k \,\, v_{k+1} )$, $H_k =   
    \left[\begin{array}{cc}
                    H_{k-1}&h_k\\
                    0&\|v_{k+1}\|\\
    \end{array}\right]$,
where the first column in $H_k$ is omitted when $k = 1$.

Solve the least squares problem $y_k = \argmin_{y}\| \beta e_1 - H_k y\|$.

$x_m = x_0 + V_m y_m$.

Restart: compute $\|r_m \|= \|b - Ax_m\|$. Stop if $\|r_m \| \leq \varepsilon$.
Otherwise: $x_0 = x_m , v_1 = r_m /\|r_m \|$ and start Iterations again.
\caption{GMRES algorithm for the solution of the linear system}\label{alg::GMRES}
\end{algorithm}

If the matrix-by-vector products are inexact, the following Theorem gives the error bound.
\begin{theorem}\cite{inexact}
\label{th:error}
After $m$ steps of the inexact GMRES procedure, the following estimation for norm of approximate and real residues holds:
\[
\|r_m - \tilde{r}_m\| \leq \varepsilon
\] 
if for any $i \leq m$ 
\[
\|\tilde{E}_i\| \leq \frac{\sigma_{m}(H_m)}{m\|\tilde{r}_m\|}\varepsilon,
\]
where $\sigma_m(H_m)$ is a minimal singular value of the Hessenberg matrix corresponding to GMRES process and $\tilde{E}_i$ is an error corresponding to matvec on the $i$-th iteration.
\end{theorem}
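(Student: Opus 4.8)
Since this is a known result \cite{inexact}, I will only sketch the structure of the argument. The plan is to track the difference between the true residual $r_m = b - A x_m$ and the residual $\tilde{r}_m$ that the GMRES recurrence in Algorithm~\ref{alg::GMRES} actually computes, and to show that this difference is controlled by the accumulated matrix-by-vector errors weighted by the least-squares coefficients.

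First I would write down the \emph{inexact Arnoldi relation}. When the matrix-by-vector product at step $i$ is replaced by $A v_i + \tilde{E}_i$, the orthogonalization produces
\[
A V_m + \mathcal{E}_m = V_{m+1} H_m, \qquad \mathcal{E}_m = [\,\tilde{E}_1, \ldots, \tilde{E}_m\,],
\]
so that $A V_m = V_{m+1} H_m - \mathcal{E}_m$. Substituting $x_m = x_0 + V_m y_m$ and $r_0 = \beta V_{m+1} e_1$ into $r_m = r_0 - A V_m y_m$, and comparing with the computed quantity $\tilde{r}_m = V_{m+1}(\beta e_1 - H_m y_m)$, gives the exact identity
\[
r_m - \tilde{r}_m = \mathcal{E}_m y_m = \sum_{i=1}^{m} (y_m)_i\, \tilde{E}_i .
\]
Taking norms and applying the triangle inequality reduces the theorem to bounding $\sum_{i=1}^{m} |(y_m)_i|\, \|\tilde{E}_i\|$.

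The second and crucial ingredient is a bound on the individual components of the least-squares solution $y_m = \argmin_y \|\beta e_1 - H_m y\|$. The hard part is to prove the componentwise estimate
\[
|(y_m)_i| \leq \frac{\|\tilde{r}_{i-1}\|}{\sigma_m(H_m)},
\]
which ties the $i$-th coefficient to the computed residual already available at step $i-1$ and to the smallest singular value of the Hessenberg matrix. This is where the geometry of the minimal-residual projection enters: one analyzes $H_m^{\dagger}$ together with the nested structure of the successive least-squares problems, using $\|H_m^{\dagger}\| = 1/\sigma_m(H_m)$ and the monotonicity of the computed GMRES residual norms. I expect this step to be the main obstacle, since the naive bound $|(y_m)_i| \le \|y_m\| \le \beta/\sigma_m(H_m)$ only involves the initial residual and is too crude to produce the adaptive, per-step tolerance that the theorem asserts.

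Finally I would combine the two ingredients. Substituting the per-step inexactness hypothesis into the componentwise bound forces each of the $m$ summands to contribute at most $\varepsilon/m$, whence
\[
\|r_m - \tilde{r}_m\| \leq \sum_{i=1}^{m} |(y_m)_i|\,\|\tilde{E}_i\| \leq m \cdot \frac{\varepsilon}{m} = \varepsilon,
\]
which is the claimed estimate. The factor $1/m$ and the appearance of the ratio $\sigma_m(H_m)/\|\tilde{r}_m\|$ in the admissible matvec error are precisely what these two ingredients dictate; in particular the tolerance is permitted to \emph{grow} as the iteration converges and the residual shrinks, which is exactly the relaxation property that makes the inexact scheme efficient for our $\mathcal{O}(n)$ matvec.
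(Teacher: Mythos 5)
The paper itself contains no proof of this theorem: it is imported verbatim, with a citation, from \cite{inexact}, so the only meaningful comparison is with the proof in that reference---and your sketch reproduces its structure faithfully. The inexact Arnoldi relation, the residual-gap identity, and the reduction to a componentwise bound on the least-squares coefficients are exactly the steps used there. Two remarks on the details. First, a notational slip: since $\tilde{E}_i$ is the error of the $i$-th product, the perturbation matrix should be $\mathcal{E}_m = [\tilde{E}_1 v_1, \ldots, \tilde{E}_m v_m]$, whose columns are vectors, so the gap is $\sum_i (y_m)_i \tilde{E}_i v_i$ and one uses $\|\tilde{E}_i v_i\| \le \|\tilde{E}_i\|$ because $\|v_i\|=1$. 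Second, the componentwise estimate you flag as the main obstacle is precisely the key lemma of \cite{inexact}, and its proof is short and self-contained: take the QR factorization of the rectangular Hessenberg matrix, write $e_i^{\top} y_m = (e_i^{\top} R_m^{-1})\,(Q^{\top}\beta e_1)_{1:m}$, observe that the row $e_i^{\top}R_m^{-1}$ of the inverse triangular factor is supported on positions $i,\ldots,m$, that the corresponding tail of $Q^{\top}\beta e_1$ has norm equal to the computed least-squares residual $\|\tilde{r}_{i-1}\|$ (this is where the nested structure enters, via the Givens rotations), and that $\|R_m^{-1}\| = 1/\sigma_m(H_m)$. So your plan is the right one and the missing piece is a known lemma, not a conceptual hole.

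One point is worth flagging because it affects what is actually being proved. Your derivation yields the per-step condition with $\|\tilde{r}_{i-1}\|$ in the denominator, which is what \cite{inexact} proves, whereas the theorem as stated in the paper has $\|\tilde{r}_m\|$. These are not equivalent: the computed GMRES residual norms are non-increasing, so $\|\tilde{r}_m\| \le \|\tilde{r}_{i-1}\|$, the paper's condition is the more permissive one, and under it the bound becomes $\sum_i |(y_m)_i|\,\|\tilde{E}_i\| \le (\varepsilon/m)\sum_i \|\tilde{r}_{i-1}\|/\|\tilde{r}_m\|$, which can exceed $\varepsilon$. Read literally, the paper's statement therefore does not follow from this argument (and $\|\tilde{r}_m\|$ is not even available when the tolerance for step $i$ must be chosen); it is presumably a typo for $\|\tilde{r}_{i-1}\|$, and the version your sketch leads to is the correct, provable one.
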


\section{Comparison with existing methods}
SimRank algorithm computes similarities between vertices of the input graph $G = (V, E)$. 
Here we compute a single-source SimRank and a one-pair SimRank. 
The single-source SimRank is the vector with SimRank scores between given vertex $a \in V$ and all other vertices $b \in V$. 
The one-pair SimRank is the similarity measure $s(a, b)$ between two given vertices $a$ and $b$. 

The proposed method has memory requirement $\mathcal{O}(n)$ and computational complexity $\mathcal{O}(n)$ of the pre-computation step. 
Moreover, we compute the sparse approximation to the full SimRank matrix.

For the readers convenience computational and memory complexities of the previously proposed methods are presented Table~\ref{tab::tab::single-pair_comparison} and~\ref{tab::single-source_comparison}. 
\begin{table}[!ht]
\caption{Complexities of the single-pair SimRank algorithms}
\begin{tabular}{|c|c|c|c|}
\hline 
\multirow{2}{*}{Paper} & \multicolumn{2}{|c|}{Time} & \multirow{2}{*}{Memory}\\ 
\cline{2-3}
 & Query & Precomputation & \\
\hline
\cite{Fogaras2005} & $\mathcal{O}(kN)$ & $\mathcal{O}((N+d)n)$ & $\mathcal{O}(nN)$\\ 
\cite{Li2010} & $\mathcal{O}(r^4)$ & $\mathcal{O}(r^4n^2)$ & $\mathcal{O}(n^2r^2 + r^4)$\\
\cite{He2014} & $\mathcal{O}(k|E|^2)$ & Not needed & $\mathcal{O}(n^2 )$\\
This & $\mathcal{O}(n)$ & $\mathcal{O}(n)$ & $\mathcal{O}(n)$\\ 
\hline 
\end{tabular}
\label{tab::tab::single-pair_comparison}
\end{table}

\begin{table}[!ht]
\centering
\caption{Complexities of the single-source SimRank algorithms}
\begin{tabular}{|c|c|c|c|}
\hline
\multirow{2}{*}{Paper} & \multicolumn{2}{|c|}{Time} & \multirow{2}{*}{Memory}\\ 
\cline{2-3}
 & Query & Precomputation & \\
\hline
\cite{Kusumoto2014} & $\mathcal{O}(Rk|S|)$ & $\mathcal{O}(nk(R+PQ))$ & $\mathcal{O}(m + nP)$\\
\cline{2-3}
\cite{Lee2012} & \multicolumn{2}{|c|}{$\mathcal{O}(kd^k)$}& $\mathcal{O}(d^k)$\\
\cline{2-3}
This & $\mathcal{O}(1)$ & $\mathcal{O}(1)$ & $\mathcal{O}(1)$\\ 
\hline
\end{tabular}
\label{tab::single-source_comparison}
\end{table}

Some papers like \cite{Onizuka2013}, \cite{Li2010}, \cite{Yu2013}, \cite{Yu2015} consider a solution of the exact Sylvester equation without any estimation as SimRank approximation. 
However, this approach has two fundamental problems. 
The first problem is that the exact solution of discrete Lyapunov equation is only an approximation to the initial SimRank definition, so this solution leads to additional errors. 
We treat this problem by estimation of the diagonal item in discrete Lyapunov equation and get the correct approximate discrete Lyapunov equation for SimRank. 
The second problem is to solve Sylvester equation as proposed in \cite{Li2010} one needs invert adjacency matrix of the graph which is unstable and leads to loss of sparsity. Instead of invert the matrix of linear operator we use iterative method GMRES with fast approximate matvec implementation.


\section{Numerical experiments}
\subsection{Synthetic test}
To confirm the $\mathcal{O}(n)$ complexity of the proposed method, we generate random adjacency matrices with fixed number of nonzero elements in every column and compute SimRank for corresponding graphs. 
The dependence of time on $n$  is shown on Figure~\ref{fig::time_vs_n}.
The other parameters are threshold $\tau = 10^{-4}$, the scale parameter $c = 0.6$ and number of iteration $K = 50$. 
Here $nnz$ is a number of non-zero elements in every column.  
The computational cost increases when the adjacency matrix becomes more dense, which is natural.
\begin{figure}[!ht]
\centering
\resizebox{!}{0.4\textwidth}{
\import{./plots/}{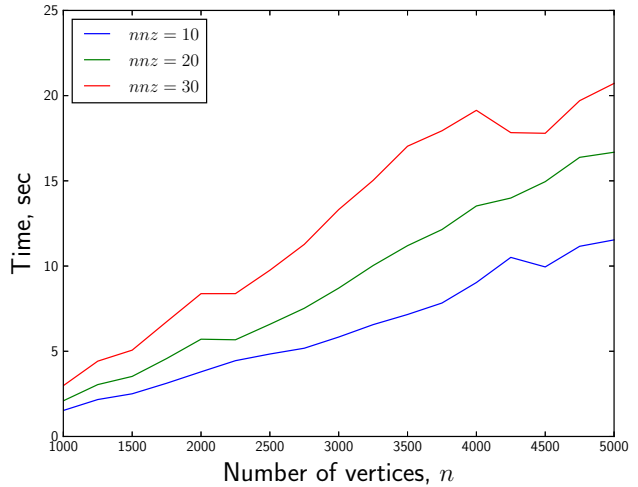}
}
\caption{Dependence of time to solve linear system on $n$ for randomly generated graphs with different number of non-zeros}
\label{fig::time_vs_n}
\end{figure}

The similar plot for dependence of memory to solve the linear system on the number of vertices $n$ is presented in Figure~\ref{fig::mem_vs_n}. 
The other parameters are the same: threshold $\tau = 10^{-4}$, the scale parameter $c = 0.6$ and number of iteration $K = 50$. 
Here $nnz$ is a number of non-zero elements in every column.
The plot shows that the required memory linearly or sub-linearly depends on the number of vertices in the graph. 
But if the adjacency matrix of the graph is enough sparse, then the required memory is constant and does not depend on the number of vertices.   

\begin{figure}[!ht]
\centering
\resizebox{!}{0.4\textwidth}{
\import{./plots/}{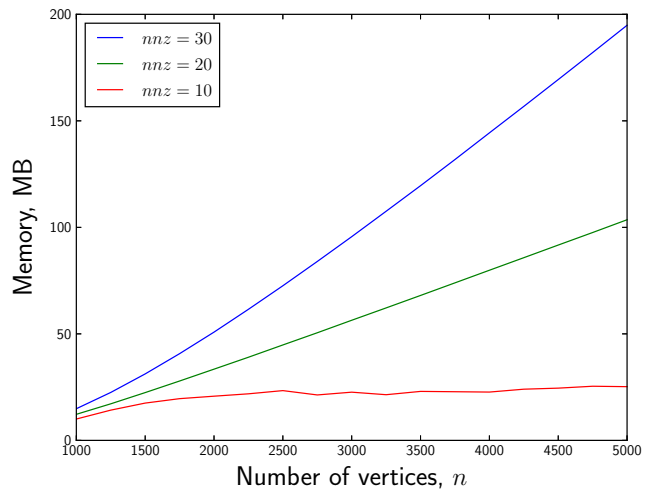}
}
\caption{Dependence of memory to solve linear system on $n$ for randomly generated graphs with different number of non-zeros}
\label{fig::mem_vs_n}
\end{figure}
\subsection{DIMACS10 collection}
In this section we experimentally study the approximation accuracy of our method. The experiments are carried out on graphs from DIMACS10 Challenge Collection\footnote{https://www.cise.ufl.edu/research/sparse/matrices/DIMACS10/}. 
The list of the considered graphs is presented in Table~\ref{tab::graphs}. 

\begin{table}[!ht]
\caption{NDCG accuracy for single-source query to the graphs from DIMACS10 Challenge Collection}
\centering
\begin{tabular}{|p{1.9cm}cccc|}\hline 
Name & $n$ & $nnz$ & $nnz/n$ & $1 -$ NDCG@$n$\\ 
\hline 
chesapeake & 39 &  340 &  8.72  & $1.3 \cdot 10^{-9}$  \\ 
data & 2851 &  30186 &  10.59 & $3.1 \cdot 10^{-8}$\\ 
delaunay n10 & 1024 & 6112 &  5.97 & $2\cdot10^{-9}$\\ 
delaunay n11 & 2048 & 12254 & 5.98 & $1.2 \cdot 10^{-8}$\\ 
delaunay n12 & 4096 & 24528 & 5.99 & $4 \cdot 10^{-8}$\\ 
delaunay n13 & 8192 & 49094 & 5.99 & $2.5\cdot10^{-7}$\\ 
uk & 4824 & 13674 & 2.83 & $2.74\cdot10^{-7}$\\ 
vsp data and seymourl & 9167 & 111732 & 12.19 & $10^{-8}$\\ 
\hline 
\end{tabular}
\label{tab::graphs}
\end{table}

Table~\ref{tab::graphs} presents $1-$NDCG@$n$ measure for convenience. To compute the NDCG@$n$ measure we make $q = 100$ random queries to SimRank and SimRank approximation for every graph except chesapeake graph $(q = 39)$. After that we have two vectors $s$ and $\tilde{s}$ of correct SimRank scores and approximate SimRank scores between query and all other graph vertices. The NDCG measure \cite{NDCG} is defined by the following equation:
\[
\text{NDCG} = \frac{1}{Z}\sum_{i = 1}^n \frac{2^{rel_i} - 1}{\log_2(i + 1)},
\]
where $i$ is an index of concept, according to the sorted approximate SimRank scores $\tilde{s}$, $rel_i = s_i$ is the ground-truth SimRank score between the query and the $i$-th vertex, and $Z$ is a normalization constant. The other parameters are $c = 0.6$, $K = 50$ and $\tau = 10^{-3}$. Also $nnz$ is the total number of non-zero elements in the adjacency matrix, $nnz/n$ is the average degree of vertex. 
\begin{figure}[!ht]
\centering
\resizebox{!}{0.4\textwidth}{
\import{./plots/}{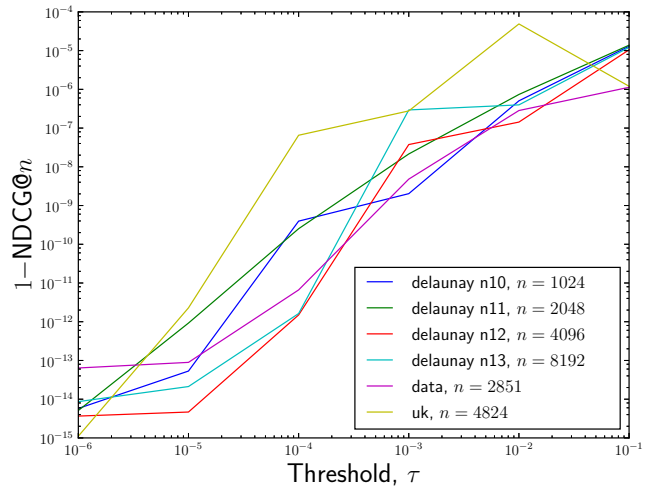}
}
\caption{Dependence NDCG on the thresholds for graph from DIMACS10 collection}
\end{figure}

\begin{figure}[!ht]
\centering
\resizebox{!}{0.4\textwidth}{
\import{./plots/}{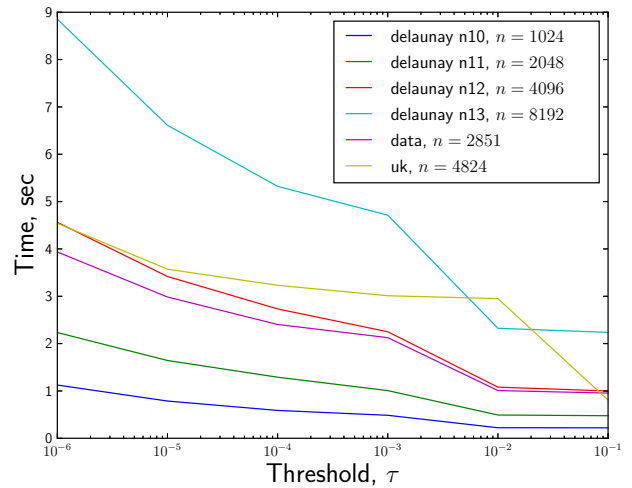}
}
\caption{Dependence of time to solve linear system on the thresholds for graph from DIMACS10 collection}
\end{figure}

\subsection{Experiment with Wikipedia}
We used Simple English Wikipedia corpus to find semantic relatedness between concepts. The undirected graph corresponding to Simple Wikipedia corpus has $150495$ vertices and $4454023$ edges. 
The direct SimRank matrix computation of such large graph is infeasible. Therefore, we assess the quality of our SimRank approximation method not with approximation error but with rationality of the obtained similar concepts. We use the following parameters in the experiment: $c = 0.6$, number of iteration $k = 10$, $\tau=10^{-4}$. 

Table~\ref{tab::wiki} shows some examples of similar concepts extracted from Simple English Wikipedia corpus by the proposed SimRank approximation algorithm. 
Each column hasthe queried concept in the top and the most similar concepts to the queried one in the other rows. 
Every column is sorted according to SimRank scores given by SimRank matrix approximation. 
We do not display these scores because of the space limitation: the scores differ in 4-th or 5-th significant figures.
\begin{figure}[!ht]
\centering
\resizebox{!}{0.4\textwidth}{
\import{./plots/}{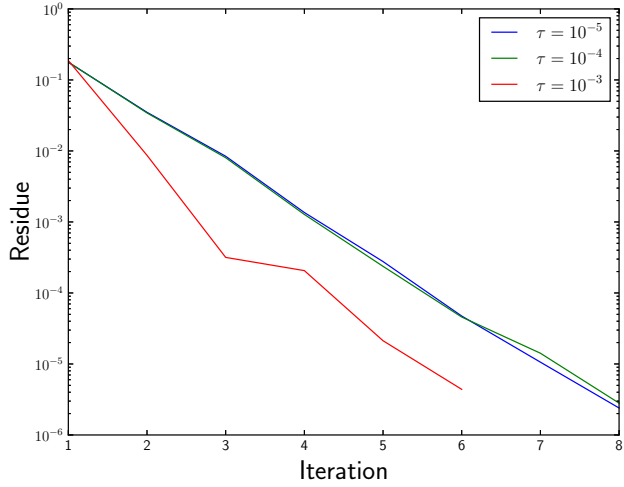}
}
\caption{Convergence of the GMRES with different thresholds $\tau$}
\label{fig::gmres_converg}
\end{figure}
\begin{table}[!ht]
\centering
\caption{Similar concepts according to proposed SimRank approximation algorithm}
\begin{tabular}{|p{3cm}|p{2cm}|p{3cm}|}
\hline
\emph{GNU} & \emph{Earth} & \emph{Liquid}\\
\hline
Richard Matthew Stallman & South Pole-Aitken basin & Plasma (matter)\\
\hline
Linux operating system & Frame of referance & Matters \\
\hline
Hurd & Interplanetary internet & Particle theory of matter\\
\hline
Debian linux & Supernova 1987A & Hematological\\
\hline
Linux (kernel) & Probotector & Blude\\
\hline
*nix & Near Earth Object & Human blood\\
\hline
\end{tabular}
\label{tab::wiki}
\end{table}

\section{Conclusions and future work}
An important research direction is the study of hypergraphs, when the adjacency matrix is replaced by the adjacency tensor. 
We plan to investigate this issue. Also, the computation of the SimRank by summing of the Neumann series can be improved by using more advanced iterative method, like the IGMRES approach considered here, but it requires a lot of technical work. 
\section{Acknowledgements}
The authors thank D. Kolesnikov for fruitful discussions.

\bibliographystyle{abbrv}
\bibliography{references}

\end{document}